\newcommand{\Nat}{\mathbb{N}}
\newcommand{\meet}{\wedge}
\newcommand{\join}{\vee}
\newcommand{\bigmeet}{\bigwedge}
\newcommand{\bigjoin}{\bigvee}
\newcommand{\lcm}{\operatorname{\mathrm{lcm}}}
\newcommand{\upsort}[1]{{#1}^{\uparrow}}
\newcommand{\weakupsort}[1]{{#1}^\vartriangle}
\newcommand{\inverse}[1]{{#1}^{-1}}
\newcommand{\eqdef}{\coloneqq}
\theoremstyle{definition}
\newtheorem{definition}{Definition}[section]
\newtheorem{example}[definition]{Example}
\newtheorem{remark}[definition]{Remark}
\theoremstyle{theorem}
\newtheorem{lemma}[definition]{Lemma}
\newtheorem{proposition}[definition]{Proposition}
\title{Sorting in Lattices}
\author{Jens Gerlach\\
        {\small Fraunhofer FOKUS, Berlin, Germany}
       }
\date{}
\begin{document}

\maketitle

\begin{abstract}
In a totally ordered set the notion of sorting a finite sequence
is defined through a suitable permutation of the sequence's indices.
In this paper we prove a simple formula that explicitly describes
how the elements of a sequence are related to those of its sorted counterpart.
As this formula relies only on the minimum and maximum functions
we use it to define the notion of sorting for lattices.
A major difference of sorting in lattices is that it does \emph{not}
guarantee that sequence elements are only rearranged.
However, we can show that other fundamental properties that are associated
with sorting are preserved.
\end{abstract}

\section{Introduction}
Let $(X,\leq)$ be a totally ordered set and $x \in X^n$
a sequence $x_1,\ldots,x_n$ of length~$n$ in~$X$.
There exists for each such sequence a permutation $\varphi$ of $[1,n] = \{1,\ldots,n\}$
such that $x\circ\varphi \in X^n$ is a nondecreasing sequence.
If $x$ is injective, then $\varphi$ is uniquely determined, and vice versa.
However, regardless whether there is exactly one permutation, the
rearrangement $x\circ\varphi$ is uniquely determined and one thus can
refer to it as \emph{the result of nondecreasing sorting~$x$} which we denote as~$\upsort{x}$.
We remark that $\upsort{x}$ is the only nondecreasing sequence of length~$n$
in which each element of $X$ appears as often as in $x$.

Nondecreasing sorting defines a map $x \mapsto \upsort{x}$
from $X^n$ to the subset of nondecreasing sequences.
This map has several interesting properties.
First of all, it is idempotent, that is,
\begin{align}
\label{eq:sorting-idempotent}
     \upsort{\left(\upsort{x}\right)} &= \upsort{x}
\intertext{and thus a projection. This  implies also that the map $x \mapsto \upsort{x}$
  is surjective.
  Secondly, for each permutation $\psi$ of $[1,n]$ we have}
\label{eq:sorting-invariant}
     \upsort{(x\circ\psi)} &= \upsort{x}
\end{align}

In Section~\ref{sec:weak-sorting} we prove Identity~\eqref{eq:sorting-equivalence}
that explicitly describes how the elements of
$\upsort{x}_1,\ldots,\upsort{x}_n$ are related to $x_1, \ldots, x_n$.
This formula only uses the minimum and maximum functions on finite sets.
Based on this observation, we define in Section~\ref{sec:lattice-sorting}
the notion of  \emph{sorting of sequences in a lattice}  through simply 
replacing the minimum\slash maximum operations by the infimum\slash supremum operations.
We also show that sorting in lattices in general not just reorders the elements
of a sequence but really changes them.
However, we also prove that our definition satisfies
other properties that are associated with sorting.

\section{A Formula for Sorting}
\label{sec:weak-sorting}

Let $(X,\leq)$ be a totally ordered set, then each
nonempty finite subset $A$ of $X$ contains a \emph{least}
and a \emph{greatest} element~\cite[R.~6.5]{BourbakiSets}.
We also speak of the minimum and maximum
of $A$ and refer to these special elements as 
$\bigmeet A$ and $\bigjoin A$, respectively.
The following inequalities hold for all $a \in A$
\begin{align}
\label{eq:min-max-bounds}
   \bigmeet A \leq a \leq \bigjoin A
\end{align}

For $A = \{x, y\}$ we use the notation $x \meet y$ and $x \join y$
to denote the minimum and maximum, respectively.

The main results of this paper depend of a particular family
of finite sets.

\begin{definition}
For $k \in [1,n]$ we denote with 
\begin{align*}
   \textstyle{\Nat\binom{n}{k}} &\eqdef \Big\{A \subset [1,n] \bigm| |A| = k \Big\}
\end{align*}
the set of subsets of~$[1,n]$ that contain exactly $k$ elements.
There are $\binom{n}{k}$ such subsets.
\end{definition}

\begin{proposition}
\label{sorting-equivalence}
Let $(x_1,\ldots,x_n)$ be a sequence in a totally ordered set, then
the following identity holds for the elements of the sequence
 $\left(\upsort{x}_1, \ldots,\upsort{x}_n\right)$ 

\begin{align}
\label{eq:sorting-equivalence}
  \upsort{x}_k  &=  \bigmeet_{I \in \Nat\binom{n}{k}} \bigjoin_{i \in I} x_i
\end{align}
\end{proposition}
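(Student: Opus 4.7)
The plan is to prove the identity by establishing the two inequalities $\upsort{x}_k \geq \bigmeet_{I \in \Nat\binom{n}{k}} \bigjoin_{i \in I} x_i$ and $\upsort{x}_k \leq \bigjoin_{i \in I} x_i$ for every $I \in \Nat\binom{n}{k}$. I would begin by fixing a permutation $\varphi$ of $[1,n]$ with $x \circ \varphi = \upsort{x}$; its existence is guaranteed by the introductory remarks, and although $\varphi$ is not unique when $x$ has ties, the value $\upsort{x}_k$ is, which is all the argument will rely on.

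For the $\leq$ direction, I would exhibit a single witness: the set $I_0 \eqdef \{\varphi(1),\ldots,\varphi(k)\}$, which lies in $\Nat\binom{n}{k}$ because $\varphi$ is a bijection. Since $\upsort{x}$ is nondecreasing, its maximum on the first $k$ positions is $\upsort{x}_k$, hence $\bigjoin_{i \in I_0} x_i = \upsort{x}_k$, and the infimum over all $k$-element subsets is therefore at most this value.

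For the $\geq$ direction, I would argue by pigeonhole. Consider $J \eqdef \{\varphi(k),\varphi(k+1),\ldots,\varphi(n)\}$, a set of cardinality $n-k+1$. Any $I$ of size $k$ must meet $J$, for otherwise $I$ and $J$ would be disjoint subsets of $[1,n]$ whose sizes sum to $n+1 > n$. Any index $i \in I \cap J$ has the form $i = \varphi(j)$ with $j \geq k$, so by monotonicity of $\upsort{x}$ we have $x_i = \upsort{x}_j \geq \upsort{x}_k$. Combining with \eqref{eq:min-max-bounds} yields $\bigjoin_{i \in I} x_i \geq x_i \geq \upsort{x}_k$ for every such $I$, which gives the desired lower bound on the infimum.

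I do not expect a serious obstacle here; the argument is essentially a pigeonhole observation, and once the witness $I_0$ is identified the two bounds match immediately. The only point requiring care is to phrase the pigeonhole step so that the bound on $\bigjoin_{i \in I} x_i$ is stated in terms of the well-defined quantity $\upsort{x}_k$ rather than in terms of $\varphi$, ensuring the conclusion is independent of which sorting permutation was chosen when ties occur.
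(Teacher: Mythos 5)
Your proposal is correct. The first inequality is proved exactly as in the paper: both arguments exhibit the witness $\varphi([1,k])=\{\varphi(1),\ldots,\varphi(k)\}$, observe that $\bigjoin_{i\in I_0}x_i=\bigjoin_{j\le k}\upsort{x}_j=\upsort{x}_k$ by monotonicity of $\upsort{x}$, and conclude that the meet over all $k$-element subsets can only be smaller. The second inequality is where you genuinely diverge. The paper exploits the total order to say that the minimum over the finite family $\bigl\{\bigjoin_{i\in I}x_i : I\in\Nat\binom{n}{k}\bigr\}$ is \emph{attained} at some particular set $B$, rewrites $\bigjoin_{i\in B}x_i$ as $\upsort{x}_m$ with $m=\bigjoin(\inverse{\varphi}(B))\ge k$, and finishes by monotonicity. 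You instead prove a uniform lower bound: by pigeonhole every $k$-element subset $I$ meets $\{\varphi(k),\ldots,\varphi(n)\}$, so every term $\bigjoin_{i\in I}x_i$ dominates $\upsort{x}_k$, and hence so does their infimum. Your version is slightly cleaner in that it never needs the infimum to be attained --- it uses only that a common lower bound of all the terms is a lower bound of their meet, which is the greatest-lower-bound property valid in any lattice; the paper's argument for this direction is intrinsically tied to total orders. (This does not extend the proposition itself to lattices, since $\upsort{x}$ and the sorting permutation $\varphi$ only exist in the totally ordered setting, but it does isolate the use of totality more sharply.) Your closing remark about phrasing the bound in terms of $\upsort{x}_k$ rather than $\varphi$ is the right care to take when ties occur, and the step $x_{\varphi(j)}=\upsort{x}_j$ is exactly Identity~\eqref{eq:sorting-permutation} as used in the paper.
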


Before we prove Proposition~\ref{sorting-equivalence} we
introduce an abbreviation for the right hand side
of Identity~\eqref{eq:sorting-equivalence}.
For a sequence~$x$ of length~$n$ we define

\begin{align}
\label{eq:weakupsort}
   \weakupsort{x}_k &\eqdef \bigmeet_{I \in \Nat\binom{n}{k}} \bigjoin_{i \in I} x_i
     && \text{for $1 \leq k \leq n$}
\end{align}

With this notation Proposition~\ref{sorting-equivalence} reads $\upsort{x} = \weakupsort{x}$.

Here are some simple observations about the elements of~$\weakupsort{x}$.

\begin{itemize}
\item
Since $(X,\leq)$ is a total order, we know that each element of~$\weakupsort{x}$
belongs to~$x$.

\item
In particular, we see that $\weakupsort{x}_1$ is the least element and
$\weakupsort{x}_n$ the greatest element of~$x$, respectively.
\end{itemize}

The following lemma states that $\weakupsort{x}$ is a nondecreasing sequence.
\begin{lemma}
\label{weakupsort-nondecreasing}
If $x$ is a sequence of length~$n$ in a totally ordered set $(X,\leq)$, 
then~$\weakupsort{x}$ is a nondecreasing sequence.
\end{lemma}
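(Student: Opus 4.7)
The plan is to establish the inequality $\weakupsort{x}_k \leq \weakupsort{x}_{k+1}$ for each $k \in [1,n-1]$ by exploiting the fact that every $(k+1)$-subset of $[1,n]$ has a $k$-subset, which lets us compare the suprema indexed on the right hand side of \eqref{eq:weakupsort} with those on the left.

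More precisely, first I would observe the monotonicity of $I \mapsto \bigjoin_{i \in I} x_i$ with respect to inclusion: whenever $I \subseteq J \subseteq [1,n]$, inequality \eqref{eq:min-max-bounds} applied element-wise gives $\bigjoin_{i \in I} x_i \leq \bigjoin_{j \in J} x_j$. Next, fix an arbitrary $J \in \Nat\binom{n}{k+1}$ and choose any $k$-subset $I \subset J$ (such an $I$ exists since $k+1 \geq 2$, so we can just remove one element from $J$). Then by the monotonicity observation, $\bigjoin_{i \in I} x_i \leq \bigjoin_{j \in J} x_j$, and since $I \in \Nat\binom{n}{k}$, the left hand side dominates $\weakupsort{x}_k$ from above, which yields
\begin{align*}
   \weakupsort{x}_k \;=\; \bigmeet_{I' \in \Nat\binom{n}{k}} \bigjoin_{i \in I'} x_i \;\leq\; \bigjoin_{i \in I} x_i \;\leq\; \bigjoin_{j \in J} x_j.
\end{align*}

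Since $J \in \Nat\binom{n}{k+1}$ was arbitrary, $\weakupsort{x}_k$ is a lower bound for the family $\{\bigjoin_{j \in J} x_j \mid J \in \Nat\binom{n}{k+1}\}$, hence it is bounded above by the greatest lower bound of that family, which is exactly $\weakupsort{x}_{k+1}$. This gives $\weakupsort{x}_k \leq \weakupsort{x}_{k+1}$, as required.

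I do not expect any serious obstacle here; the only subtle point is to be explicit that the argument uses only the lattice-style inequalities \eqref{eq:min-max-bounds} together with the combinatorial fact that every $(k+1)$-element subset of $[1,n]$ contains a $k$-element subset. In particular, the argument does not rely on the totality of $\leq$ at all, which is worth flagging because it foreshadows the generalization to arbitrary lattices in Section~\ref{sec:lattice-sorting}.
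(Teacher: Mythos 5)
Your proof is correct and follows essentially the same route as the paper's: apply Inequality~\eqref{eq:min-max-bounds} to bound $\weakupsort{x}_k$ by the join over a $k$-element subset of an arbitrary $(k+1)$-element set, use monotonicity of the join under inclusion, and then pass to the meet over all $(k+1)$-element sets. Your closing observation that only the lower-bound (and greatest-lower-bound) properties of $\bigmeet$ are used is exactly the point the paper records in Remark~\ref{remark-weakupsort}.
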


\begin{proof}
Let $1 \leq k < n$ and  $I$ be an arbitrary subset of~$[1,n]$ with $k+1$ elements.
If $J$ is a subset of~$I$ with $k$ elements, then we have
\begin{align*}
   \weakupsort{x}_k 
     = \bigmeet_{L\in\Nat\binom{n}{k}} \bigjoin_{l \in L} x_l 
     &\leq \bigjoin_{j \in J} x_j
           && \text{by Inequality \eqref{eq:min-max-bounds}}\\
     &\leq \bigjoin_{i \in I} x_i 
           && \text{by $J \subset I$}\\
\intertext{Since $I$ is an arbitrary set of~$k+1$ elements we obtain from here}
   \weakupsort{x}_k
      \leq \bigmeet_{I\in\Nat\binom{n}{k+1}} \bigjoin_{i \in I} x_i
      &= \weakupsort{x}_{k+1}
\end{align*}
\end{proof}

\begin{remark}
\label{remark-weakupsort}
Note that in the proof of Lemma~\ref{weakupsort-nondecreasing}
we have only used the fact that the minimum of a set is 
a \emph{lower bound} for all elements of that set.
\end{remark}

\begin{proof}[Proof of Proposition~\ref{sorting-equivalence}]
We will show that for each $k$ with $k \in [1,n]$ both
\begin{align*}
   \weakupsort{x}_k \leq \upsort{x}_k &&\text{and}&& \upsort{x}_k \leq \weakupsort{x}_k
\end{align*}
hold.

Let $\varphi$ be a permutation of~$[1,n]$ with 
\begin{align}
\label{eq:sorting-permutation}
\upsort{x} &= x\circ\varphi
\intertext{and let $J \subset [1,n]$ be the subset for which} 
\label{eq:permutation-of-interval}
J &= \varphi\left([1,k]\right)
\end{align}
holds.

From the fact that $J$ contains exactly~$k$ elements and that~$\upsort{x}$
is nondecreasing we conclude
\begin{align*}
   \weakupsort{x}_k =
      \bigmeet_{I \in \Nat\binom{n}{k}} \bigjoin_{i \in I} x_i 
      &\leq \bigjoin_{j \in J} x_j
           && \text{by Inequality \eqref{eq:min-max-bounds}}\\
      &= \bigjoin_{j \in J} \upsort{x}\left(\inverse{\varphi}(j)\right)
           && \text{by Identity \eqref{eq:sorting-permutation}}\\
      &= \bigjoin_{i \in [1,k]} \upsort{x}_i 
           && \text{by Identity \eqref{eq:permutation-of-interval}}\\
\intertext{According to Lemma~\ref{weakupsort-nondecreasing}, the sequence
 $\upsort{x}$ is nondecreasing and we obtain}
      &= \upsort{x}_k
\end{align*}
which finishes the first part of the proof.

Conversely, we conclude from the definition of $\weakupsort{x}$
and the fact that $(X,\leq)$ is a total order that there exists a subset $B$ of~$[1,n]$
with exactly $k$ elements such that
\begin{align*}
    \weakupsort{x}_k = \bigmeet_{I \in \Nat\binom{n}{k}} \bigjoin_{i \in I} x_i
        &= \bigjoin_{i \in B} x_i \\
        &= \bigjoin_{i \in B} \upsort{x}\left(\inverse{\varphi}(i)\right) 
           && \text{by Identity \eqref{eq:sorting-permutation}}\\
        &= \bigjoin_{j \in \inverse{\varphi}(B)} \upsort{x}_j
\end{align*}
holds.
Since $\upsort{x}$ is nondecreasing we have
\begin{align}
\nonumber
   \bigjoin_{j \in \inverse{\varphi}(B)} \upsort{x}_j &= \upsort{x}_m
         && \text{where} &&
   m = \bigjoin(\inverse{\varphi}(B))
\intertext{is the greatest element of $\inverse{\varphi}(B)$.
Thus, we have}
\label{eq:almost-there}
  \weakupsort{x}_k &= \upsort{x}_m
\end{align}

However, since $\bigjoin(\inverse{\varphi}(B))$ is a subset
of~$[1,n]$ that has exactly $k$ elements we have
\begin{align*}
  k &\leq m
\intertext{and since $\upsort{x}$ is nondecreasing}
  \upsort{x}_k &\leq \upsort{x}_m
\intertext{From this and Identity~\eqref{eq:almost-there}  we conclude}
   \upsort{x}_k &\leq \weakupsort{x}_k
\end{align*}
which completes the proof.
\end{proof}

\section{Sorting in Lattices}
\label{sec:lattice-sorting}

Let $(X,\leq)$ be a \emph{partial order} that is also a \emph{lattice} $(X,\meet,\join)$~\cite{Graetzer2003}, that is,
for each $x, y \in X$ there exists the  {infimum}  $x \meet y$
and the supremum $x \join y$.
These operations are commutative and associative.
Moreover, they satisfy the so-called
absorption properties for all $x, y \in X$
\begin{align*}
x \join (x\meet y)  &= x \\
x \meet (x \join y) &= x
\end{align*}

In a lattice, the infimum and supremum exist for every finite subset~$A$~\cite[p.~4]{Graetzer2003}
and are denoted by $\bigmeet A$ and $\bigjoin A$, respectively.
Note, however, that for a finite subset $A$ in a general lattice neither the infimum nor the supremum
necessarily belong to~$A$.
If $(X, \leq)$ is a \emph{total order}, then $\bigmeet$ and $\bigjoin$ are the
minimum and maximum functions.
This means, our notation is consistent with that of Section~\ref{sec:weak-sorting}.

An essential observation is that for a sequence $x$ of length~$n$ in a lattice the
value
\begin{align*}
   \bigmeet_{I \in \Nat\binom{n}{k}} \bigjoin_{i \in I} x_i 
\end{align*}
is well-defined for $k \in [1,n]$.
This motivates the following definition.

\begin{definition}
If $x$ is a sequence of length~$n$ in a lattice $(X,\meet,\join)$, then
we refer to $\weakupsort{x}$ as defined by Identity~\eqref{eq:weakupsort}
as \emph{$x$ nondecreasingly sorted with respect to the lattice $(X,\meet,\join)$}.
\end{definition}

The following lemma states that for $\weakupsort{x}$ is indeed a 
nondecreasing sequence with respect to the partial order $(X,\leq)$ of the lattice.

\begin{lemma}
\label{lattice-sorting-nondecreasing}
If $x$ is a finite sequence in a lattice $(X,\meet,\join)$ with associated partial 
order $(X,\leq)$, then Identity~\eqref{eq:weakupsort} defines a nondecreasing 
sequence $\weakupsort{x}$.
\end{lemma}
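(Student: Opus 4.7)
The plan is to observe that the proof of Lemma~\ref{weakupsort-nondecreasing} transfers verbatim to the lattice setting, as already hinted at in Remark~\ref{remark-weakupsort}. The proof there made use of exactly two facts about the minimum and maximum operations: that $\bigmeet A \leq a$ for every $a \in A$ (Inequality~\eqref{eq:min-max-bounds}), and that $\bigjoin_{j \in J} x_j \leq \bigjoin_{i \in I} x_i$ whenever $J \subset I$. Both of these are properties that continue to hold in any lattice, and the argument never appealed to the totality of the order.

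Concretely, I would first note that in a lattice $\bigmeet A$ is by definition the greatest lower bound of~$A$, so $\bigmeet A \leq a$ for all $a \in A$; this gives the analogue of Inequality~\eqref{eq:min-max-bounds} with no change. Next I would record the monotonicity of $\bigjoin$ in its argument set: for $J \subset I$, each $x_j$ with $j \in J$ is bounded above by $\bigjoin_{i \in I} x_i$, so this upper bound dominates the least upper bound $\bigjoin_{j \in J} x_j$. This is the only nontrivial ingredient, but it is a direct consequence of the universal property of the supremum.

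With these two facts in hand, the proof proceeds exactly as in Lemma~\ref{weakupsort-nondecreasing}. For $1 \leq k < n$ and any $I \in \Nat\binom{n}{k+1}$, pick any $J \subset I$ with $|J| = k$; then
\begin{align*}
   \weakupsort{x}_k
     = \bigmeet_{L \in \Nat\binom{n}{k}} \bigjoin_{l \in L} x_l
     \leq \bigjoin_{j \in J} x_j
     \leq \bigjoin_{i \in I} x_i,
\end{align*}
and taking the infimum over all such $I$ yields $\weakupsort{x}_k \leq \weakupsort{x}_{k+1}$.

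The main obstacle is not really an obstacle at all but rather a bookkeeping point: I must make it explicit that the two order-theoretic properties used in Lemma~\ref{weakupsort-nondecreasing} remain valid in a general lattice, so that the reader sees why no genuinely new argument is needed. In fact, the cleanest presentation may simply be to cite Remark~\ref{remark-weakupsort} and Lemma~\ref{weakupsort-nondecreasing}, adding one sentence to verify that the required property of the infimum/supremum operations is part of their definition in a lattice.
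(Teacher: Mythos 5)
Your proposal is correct and follows exactly the paper's route: the paper likewise proves this lemma by observing that the argument of Lemma~\ref{weakupsort-nondecreasing} carries over verbatim, citing Remark~\ref{remark-weakupsort}. If anything you are slightly more careful than the paper, since you also verify explicitly the second ingredient---that $J \subset I$ implies $\bigjoin_{j \in J} x_j \leq \bigjoin_{i \in I} x_i$ via the universal property of the supremum---which the paper's remark leaves tacit.
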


\begin{proof}
In order to prove this lemma we can proceed exactly as in the proof
of~Lemma~\ref{weakupsort-nondecreasing} where $(X,\leq)$ is a total order.
As noted in Remark~\ref{remark-weakupsort}, we have used only
the fact that $\bigmeet A$ is a {lower bound}
of~$A$ which by definition also holds for lattices.
\end{proof}

\pagebreak

A simple consequence of Identity~\eqref{eq:weakupsort} and Lemma~\ref{lattice-sorting-nondecreasing}
is that sorting in lattices respects lower and upper bounds of the original sequence.

\begin{lemma}
\label{lattice-sorting-bounds}
Let $x = (x_1,\ldots,x_n)$ be a finite sequence in a lattice $(X,\meet,\join)$ with associated partial
order $(X,\leq)$.
If for $1 \leq i \leq n$ holds
\begin{align*}
    a &\leq x_i \leq b,
\intertext{then}
    a &\leq \weakupsort{x}_i \leq b
\end{align*}
holds for $1 \leq i \leq n$ as well.
\end{lemma}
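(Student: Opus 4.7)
The plan is to reduce the claim for all indices $i$ to just the two extreme indices $i=1$ and $i=n$, exploiting the monotonicity of $\weakupsort{x}$ established in Lemma~\ref{lattice-sorting-nondecreasing}. Once we know that
\begin{align*}
  \weakupsort{x}_1 \leq \weakupsort{x}_i \leq \weakupsort{x}_n
\end{align*}
holds for every $i \in [1,n]$, it suffices to prove $a \leq \weakupsort{x}_1$ and $\weakupsort{x}_n \leq b$.

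For the endpoints I would compute $\weakupsort{x}_1$ and $\weakupsort{x}_n$ directly from Definition~\eqref{eq:weakupsort}. The set $\Nat\binom{n}{1}$ consists exactly of the singletons $\{i\}$ for $i \in [1,n]$, so the inner supremum collapses to $x_i$ and $\weakupsort{x}_1 = \bigmeet_{i\in[1,n]} x_i$. Dually $\Nat\binom{n}{n} = \{[1,n]\}$ contains only the full index set, so $\weakupsort{x}_n = \bigjoin_{i\in[1,n]} x_i$.

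From here the two inequalities are immediate from the universal properties of $\bigmeet$ and $\bigjoin$ in a lattice. The hypothesis $a \leq x_i$ for all $i$ says that $a$ is a common lower bound of the $x_i$, so $a$ is dominated by their infimum, giving $a \leq \weakupsort{x}_1$. Symmetrically, $b$ is a common upper bound of the $x_i$, hence $\weakupsort{x}_n \leq b$. Combining these with the monotonicity sandwich above yields $a \leq \weakupsort{x}_i \leq b$ for every $i$.

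There is no real obstacle here; the argument is purely formal and relies only on the defining property of $\bigmeet$ as greatest lower bound and $\bigjoin$ as least upper bound, together with the already-proven fact that $\weakupsort{x}$ is nondecreasing. The only subtlety worth mentioning explicitly is that the result quietly uses $n \geq 1$, so that the endpoints $\weakupsort{x}_1$ and $\weakupsort{x}_n$ exist and the minima and maxima above are taken over nonempty index sets.
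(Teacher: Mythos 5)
Your proof is correct and follows essentially the same route as the paper's: identify $\weakupsort{x}_1$ and $\weakupsort{x}_n$ as the infimum and supremum of the $x_i$, bound them by $a$ and $b$ via the universal properties, and propagate to all indices using the monotonicity from Lemma~\ref{lattice-sorting-nondecreasing}. The only difference is that you write out both endpoint arguments explicitly where the paper handles the lower bound ``analogously.''
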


\begin{proof}
From Identity~\eqref{eq:weakupsort} (see also Identity~\eqref{eq:weakupsort-explicit})
follows that~$\weakupsort{x}_n$ is the supremum of the elements
$x_1,\ldots,x_n$. Thus, we have $\weakupsort{x}_n \leq b$.
Lemma~\ref{lattice-sorting-nondecreasing} ensures 
that~$\weakupsort{x}_n$ is the largest element of~$\weakupsort{x}$.
Thus we have $\weakupsort{x}_i \leq b$ for $1 \leq i \leq n$.
The case for the lower bound~$a$ is treated analogously.
\end{proof}

\subsection{Examples}
\label{sec:lattice-sorting-examples}

When applying Identity~\eqref{eq:weakupsort} it is sometimes convenient to
use a slightly  more explicit way to write the elements of $\weakupsort{x}$.

%\begin{equation}
%\begin{aligned}
\begin{align}
\nonumber
   \weakupsort{x}_1 &= x_1 \meet \ldots \meet x_n \\
\nonumber
   \weakupsort{x}_2 &= \bigmeet_{1 \leq i < j \leq n} x_i \join x_j\\
\nonumber
   &\,\vdotswithin{=}  \\
\label{eq:weakupsort-explicit}
   \weakupsort{x}_k &= \bigmeet_{1 \leq i_1 < \ldots < i_k \leq n} x_{i_1} \join \ldots \join x_{i_k}\\
\nonumber
   &\,\vdotswithin{=} \\
\nonumber
   \weakupsort{x}_n &= x_1 \join \ldots \join x_n.
\end{align}
%\end{aligned}
%\end{equation}

\begin{example}
Consider the finite set $X = \{x, y, z\}$.
Figure~\ref{fig:lattice-set} shows the lattice of all subsets of $X$.

\begin{figure}[hbt]
\centering
\includegraphics[width=0.50\linewidth]{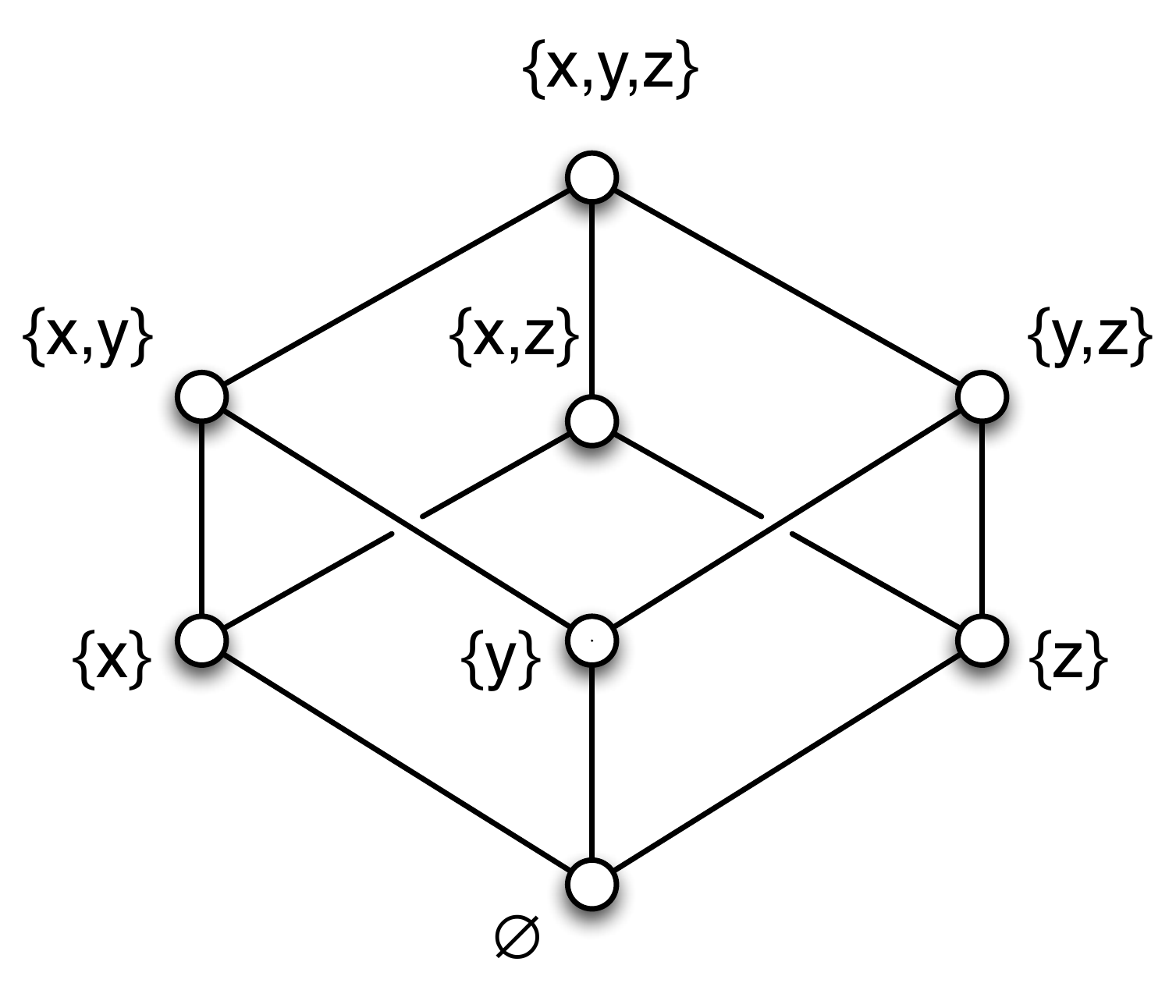}
\caption{\label{fig:lattice-set} The lattice of $\{x, y , z\}$}
\end{figure}

Let $x$ be the sequence
\begin{align*}
   a &= \bigl(\{x\}, \{y\}, \{z\}\bigr)
\intertext{then}
   \weakupsort{a} &= \bigl(\emptyset, \emptyset, X\bigr)
\end{align*}
Thus, $\weakupsort{a}$ is a nondecreasing sequence that consists of elements that are completely
different from those of~$a$.
\end{example}

\begin{example}
Let us consider now the lattice $(\Nat,\gcd,\lcm)$
where $\gcd(x, y)$ and $\lcm(x, y)$ denote the \emph{greatest common divisor}
and \emph{least common multiple} of $x$ and $y$, respectively.
The associated partial order of this lattices is defined by divisibility of natural numbers.
Table~\ref{tbl:sequences} shows some examples of our definition of sorting for
different sequences in $(\Nat,\gcd,\lcm)$.
Again we see that sorting in a lattice may change the elements in a sequence.

\begin{table}[hbt]
\begin{center}
\begin{tabular}{|c|c|}
\hline
$x$ & $\weakupsort{x}$  \\
\hline
$(1)$  &  $(1)$  \\
$(1,2)$  &  $(1,2)$  \\
$(1,2,3)$  &  $(1,1,6)$  \\
$(1,2,3,4)$  &  $(1,1,2,12)$  \\
$(1,2,3,4,5)$  &  $(1,1,1,2,60)$  \\
$(1,2,3,4,5,6)$  &  $(1,1,1,2,6,60)$  \\
$(1,2,3,4,5,6,7)$  &  $(1,1,1,1,2,6,420)$  \\
$(1,2,3,4,5,6,7,8)$  &  $(1,1,1,1,2,2,12,840)$  \\
\hline
\end{tabular}
\caption{\label{tbl:sequences}Some examples of sorting in $(\Nat, \gcd, \lcm)$}.
\end{center}
\end{table}

\end{example}

\subsection{Elementary Properties}
\label{sec:lattice-sorting-properties}

In this section we prove that some well-known properties of sorting in 
a totally ordered set also hold for our definition of sorting in lattices.

The following lemma restates the idempotence of sorting in a totally ordered set,
expressed by Identity~\eqref{eq:sorting-idempotent},
for the case of lattices.

\begin{lemma}
If $x$ is a finite sequence in a lattice $(X,\meet,\join)$, then 
\begin{align*}
   \weakupsort{\left(\weakupsort{x}\right)} = \weakupsort{x}
\end{align*}
\end{lemma}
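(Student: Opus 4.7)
The plan is to set $y \eqdef \weakupsort{x}$ and exploit the fact, guaranteed by Lemma~\ref{lattice-sorting-nondecreasing}, that $y_1 \leq y_2 \leq \ldots \leq y_n$ with respect to the lattice order, in order to collapse the double expression defining $\weakupsort{y}_k$ down to $y_k$.

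The key auxiliary observation I would prove first is this: if $(y_1,\ldots,y_n)$ is nondecreasing in a lattice and $I \subseteq [1,n]$ is a nonempty subset with largest element $m = \max I$, then
\begin{align*}
\bigjoin_{i \in I} y_i &= y_m.
\end{align*}
Indeed, $y_m$ is an upper bound of $\{y_i \mid i \in I\}$ because $i \leq m$ implies $y_i \leq y_m$, and $y_m$ itself lies in that set, so it coincides with the least upper bound. Dually, $\bigmeet_{i \in I} y_i = y_{\min I}$ whenever $y$ is nondecreasing.

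Substituting this into the definition of $\weakupsort{y}_k$ yields
\begin{align*}
\weakupsort{y}_k &= \bigmeet_{I \in \Nat\binom{n}{k}} \bigjoin_{i \in I} y_i \;=\; \bigmeet_{I \in \Nat\binom{n}{k}} y_{\max I}.
\end{align*}
As $I$ ranges over all $k$-element subsets of $[1,n]$, its maximum ranges precisely over $[k,n]$; each value $j \in [k,n]$ is realised, for instance, by $I = \{1,\ldots,k-1,j\}$. Hence
\begin{align*}
\weakupsort{y}_k &= \bigmeet_{j \in [k,n]} y_j,
\end{align*}
and by the dual form of the auxiliary observation (applied to the nondecreasing sequence $y_k, y_{k+1}, \ldots, y_n$) this infimum equals $y_k$. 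We conclude $\weakupsort{y}_k = y_k = \weakupsort{x}_k$ for every $k \in [1,n]$.

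The only real content is the auxiliary fact about joins and meets of nondecreasing sequences in a lattice; once that is in hand, everything else is bookkeeping on index sets. Notably, the argument avoids any use of permutations (unlike the totally ordered case treated in Proposition~\ref{sorting-equivalence}), because the monotonicity of $\weakupsort{x}$ delivered by Lemma~\ref{lattice-sorting-nondecreasing} carries all the weight.
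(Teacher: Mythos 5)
Your proof is correct, but it takes a genuinely different route from the paper's. The paper argues abstractly: since $\weakupsort{x}$ is nondecreasing (Lemma~\ref{lattice-sorting-nondecreasing}), the order $\leq$ restricted to the set $\{\weakupsort{x}_1,\ldots,\weakupsort{x}_n\}$ is total, so $\weakupsort{x}$ can be sorted in the classical sense; it then invokes Proposition~\ref{sorting-equivalence} to identify $\weakupsort{(\weakupsort{x})}$ with $\upsort{(\weakupsort{x})}$, which equals $\weakupsort{x}$ because that sequence is already nondecreasing. You instead compute $\weakupsort{(\weakupsort{x})}_k$ directly from Identity~\eqref{eq:weakupsort}, using the elementary fact that for a nondecreasing sequence $y$ in a lattice one has $\bigjoin_{i \in I} y_i = y_{\max I}$ and $\bigmeet_{i \in I} y_i = y_{\min I}$, so that the double expression collapses to $\bigmeet_{j \in [k,n]} y_j = y_k$. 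Your calculation is sound (the maxima of $k$-element subsets do range exactly over $[k,n]$, and the meet is insensitive to repeated values), and it is entirely self-contained: it needs neither Proposition~\ref{sorting-equivalence} nor any permutation argument. Arguably it also makes explicit a point the paper's proof glosses over, namely that on a chain the lattice meet and join of a finite subset coincide with its least and greatest elements --- this is precisely what licenses the paper's silent transfer between the lattice operations and classical sorting on the totally ordered subset. The paper's version is shorter given that Proposition~\ref{sorting-equivalence} is already available; yours is longer but more elementary and would survive even if one had not first treated the totally ordered case.
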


\begin{proof}
We know from Lemma~\ref{lattice-sorting-nondecreasing} that $\weakupsort{x}$ is
a nondecreasing sequence in the partial order $(X,\leq)$.
Thus, the relation~$\leq$ is a \emph{total order} on the set
\begin{align*}
  \left\{\weakupsort{x}_1,\ldots,\weakupsort{x}_n\right\} \subset X
\end{align*}
In other words we can sort $\weakupsort{x}$ in the classical sense.
From this follows
\begin{align*}
   \weakupsort{x} &= \upsort{\left(\weakupsort{x}\right)}\\
                  &= \weakupsort{\left(\weakupsort{x}\right)} 
                       && \text{by Identity~\eqref{eq:sorting-equivalence}}
\end{align*}
\end{proof}

We now restate the invariance of sorting under permutations---see Identity~\eqref{eq:sorting-invariant}.

\begin{lemma}
If $x$ is a finite sequence in a lattice $(X,\meet,\join)$ and $\psi$ a permutation
of~$[1,n]$, then 
\begin{align*}
   \weakupsort{(x\circ\psi)} = \weakupsort{x}
\end{align*}
holds.
\end{lemma}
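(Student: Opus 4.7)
The plan is to prove the identity pointwise, showing $\weakupsort{(x\circ\psi)}_k = \weakupsort{x}_k$ for every $k \in [1,n]$, by pure re-indexing using the fact that $\psi$ is a bijection. I do not expect to need anything beyond the definition in Identity~\eqref{eq:weakupsort} together with the commutativity and associativity of $\join$ and $\bigmeet$.

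First I would expand the left-hand side using the definition:
\begin{align*}
   \weakupsort{(x\circ\psi)}_k
      = \bigmeet_{I \in \Nat\binom{n}{k}} \bigjoin_{i \in I} (x\circ\psi)_i
      = \bigmeet_{I \in \Nat\binom{n}{k}} \bigjoin_{i \in I} x_{\psi(i)}.
\end{align*}
The next step is to re-index the inner supremum: for a fixed $I \in \Nat\binom{n}{k}$, since $\psi$ is a bijection from $I$ onto $\psi(I)$, we have $\bigjoin_{i \in I} x_{\psi(i)} = \bigjoin_{j \in \psi(I)} x_j$.

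Then I would handle the outer infimum by observing that the map $I \mapsto \psi(I)$ is a bijection from $\Nat\binom{n}{k}$ onto itself; it sends $k$-element subsets of $[1,n]$ to $k$-element subsets of $[1,n]$, and has $I \mapsto \inverse{\psi}(I)$ as its inverse. Reindexing the outer meet along this bijection yields
\begin{align*}
   \bigmeet_{I \in \Nat\binom{n}{k}} \bigjoin_{j \in \psi(I)} x_j
      = \bigmeet_{J \in \Nat\binom{n}{k}} \bigjoin_{j \in J} x_j
      = \weakupsort{x}_k,
\end{align*}
which is the desired conclusion.

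I do not anticipate a real obstacle; the argument is essentially a change of dummy variables in both the inner and outer aggregations, and the only facts needed are that $\psi$ permutes $[1,n]$ and therefore permutes $\Nat\binom{n}{k}$ for each $k$. Notably, one never has to invoke the lattice axioms beyond what is already built into the definition of $\bigmeet$ and $\bigjoin$, so the proof for lattices is literally the same as it would be in the totally ordered case.
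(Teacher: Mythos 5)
Your proposal is correct and follows essentially the same route as the paper: expand the definition, re-index the inner supremum via $\bigjoin_{i \in I} x_{\psi(i)} = \bigjoin_{j \in \psi(I)} x_j$, and then re-index the outer infimum using the fact that $I \mapsto \psi(I)$ permutes $\Nat\binom{n}{k}$. No gaps.
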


\begin{proof}
We have for $k \in [1,n]$
\begin{align*}
  \weakupsort{(x\circ\psi)}_k &= \bigmeet_{A \in \Nat\binom{n}{k}} \bigjoin_{i \in A} x(\psi(i)) \\
     &= \bigmeet_{A \in \Nat\binom{n}{k}} \bigjoin_{j\in \psi(A)} x(j) \\
     &= \bigmeet_{B \in \psi\left(\Nat\binom{n}{k}\right)} \bigjoin_{j\in B} x(j)
\intertext{Because $\psi$ is a permutation of $[1,n]$ we find that
  $\psi\left(\Nat\binom{n}{k}\right) = \Nat\binom{n}{k}$ and conclude}
     &= \bigmeet_{B \in \Nat\binom{n}{k}} \bigjoin_{j\in B} x(j)  \\
     &= \weakupsort{x} 
\end{align*}
\end{proof}

\section{Conclusion}

Proposition~\ref{sorting-equivalence} states through
Identity~\eqref{eq:sorting-equivalence} an explicit relationship
between the elements of a
finite sequence in a totally ordered sets to its sorted counterpart.

The author does not suggest that Identity~\eqref{eq:sorting-equivalence}
is an efficient algorithm for sorting.
Since there are~$2^n$ subsets of~$[1,n]$,
a straightforward implementation leads to an algorithm of exponential complexity.
Note that the proven identity bears some similarity to the Binomial Theorem of elementary algebra.
The main benefit of that proposition is \emph{not} to efficiently compute $(a+b)^n$
but to serve as a means for useful transformations in proofs and algorithms.

Using Identity~\eqref{eq:sorting-equivalence} we are able to
define the notion of sorting finite sequences in lattices.
Compared to sorting in a totally ordered set, sorting in lattices is a more
\emph{invasive} procedure because, in general, it  changes sequence elements.
However, the definition maintains other elementary properties that are associated with sorting.

\section{Acknowledgment}
The author would like to express his gratitude for the valuable suggestions
of his colleagues Jochen Burghardt and Hans Werner Pohl.

%\clearpage

%\tableofcontents
%\listoffixmes

\end{document}